\documentclass[aps, pra, superscriptaddress, nofootinbib, showpacs]{revtex4-1}
\usepackage{amsfonts,amssymb,amsmath}
\usepackage{amsthm}
\usepackage{graphics,graphicx,epsfig}
\usepackage[dvipsnames]{xcolor}
\usepackage[normalem]{ulem}
\usepackage{comment}
\usepackage{epstopdf}
\usepackage{subfigure}
\usepackage{float}

\usepackage{hyperref}

\newtheorem{theorem}{Theorem}

% shortcuts
\def\be{\begin{eqnarray}}
\def\ee{\end{eqnarray}}
\def\Tr{\mathrm{Tr}}
%---------------------------------
\begin{document}

\title{Entanglement criterion via general symmetric informationally complete measurements}
\author{Le-Min Lai}
%\thanks{lailemin@outlook.com.}
\affiliation{School of Mathematical Sciences,  Capital Normal University,  Beijing 100048,  China}
\author{Tao Li}
%\thanks{litao@btbu.edu.cn.}
\affiliation{School of Science, Beijing Technology and Business University, Beijing 100048, China}
\author{Shao-Ming Fei}
%\thanks{feishm@mail.cnu.edu.cn}
\affiliation{School of Mathematical Sciences,  Capital Normal University,  Beijing 100048,  China}
\affiliation{Max Planck Institute for Mathematics in the Sciences, 04103, Leipzig, Germany}
\author{Zhi-Xi Wang}
%\thanks{wangzhx@cnu.edu.cn}
\affiliation{School of Mathematical Sciences,   Capital Normal University,   Beijing 100048,   China}

\begin{abstract}
We study the quantum separability problem by using general symmetric informationally complete measurements and present a separability criterion for arbitrary dimensional bipartite systems.
We show by detailed examples that our criterion is more powerful than the existing ones in entanglement
detection.
\keywords{Entanglement detection \and Separability criterion \and General symmetric informationally complete measurements}
% \PACS{PACS code1 \and PACS code2 \and more}
% \subclass{MSC code1 \and MSC code2 \and more}
\end{abstract}

\maketitle

\section{Introduction}
\label{intro}
Quantum entanglement is one of the most fundamental resources in quantum information processing \cite{1, 2, 3}. Operational and efficient criteria for the detection of entanglement are of great significance. It has been discussed that the problem of determining whether or not a given state is entangled is NP-hard \cite{4, 5, 6}. There have been numerous criteria to distinguish quantum entangled states from the separable ones,  such as positive partial transposition(PPT) criterion \cite{7, 8, 9}, realignment criterion \cite{10, 11, 12, 13, 14, 15, 16}, covariance matrix criterion \cite{17}, correlation matrix criterion \cite{18, 19} and generalized form of the correlation matrix criterion \cite{20}.

While numerous mathematical tools have been employed  in entanglement detection of  quantum states,  experimental implementation of entanglement detection for unknown quantum states has fewer results \cite{21, 22, 23, 24}.  In \cite{25}, the authors connected the separability problem with the concept of mutually unbiased bases (MUBs) \cite{26} for two-qubit,  multipartite and continuous-variable quantum systems. These entanglement criteria are shown to be powerful and can be implemented experimentally.  After that, the authors in \cite{27, 28} generalized such idea and provided an entanglement criterion based on mutually unbiased measurements (MUMs) \cite{29}. Moreover, it has been shown that the criterion based on MUMs is more effective than the criterion based on MUBs.

Besides mutually unbiased bases, another intriguing topic in quantum information theory is the symmetric informationally complete positive operator-valued measures (SIC-POVMs). Most of the literature on SIC-POVMs focus on rank-$1$ SIC-POVMs that all the positive operator-valued measure (POVM) elements are proportional to rank-$1$ projectors. Nevertheless, the existence of SIC-POVMs in arbitrary dimension is still an open problem \cite{30}. In Appleby \cite{31}, the author introduced the general symmetric informationally complete measurements (general SIC-POVMs) in which the elements need not to be rank one, and showed that general SIC-POVMs exist in all finite dimensions. Furthermore, Gour and Kalev \cite{32} constructed all general SIC measurements from the generalized Gell-Mann matrices. In Chen et al. and Xi et al. \cite{33, 34}, the authors  presented separability criteria for both $d$-dimensional bipartite and multipartite systems based on general SIC-POVMs. Very recently, Bae et al. \cite{35} studied entanglement detection via quantum $2$-designs, which includes SIC-POVMs as a special example. In Czartowki et al.\cite{36}, the authors investigated the entanglement properties of multipartite systems with tight informationally complete measurements including SIC-POVMs. In addition, the authors in \cite{37} considered a nonlinear entanglement criterion based on SIC-POVMs. In Shang et al.\cite{38}, the authors used the SIC-POVMs to derive the entanglement criterion  and demonstrated the superiority of the criterion by various examples.

In this paper, we study the quantum separability problem by using general SIC-POVMs and present a separability criterion  for arbitrary high dimensional bipartite systems of a $d_A$-dimensional subsystem and a $d_B$-dimensional subsystem.  The paper is organized as follows: In  Sect.~\ref{sec:1}, we recall some basic notions of  SIC-POVMs and general SIC-POVMs.  Section \ref{sec:2} provides an entanglement  criterion based on the general SIC-POVMs and some remarks.  In Sect.~\ref{sec:3},  we compare the criterion with the ones in \cite{33} and \cite{38} via detailed examples, and show that our criterion is more efficient than the existing ones. We conclude the paper in Sect.~\ref{sec:4}.
\section{SIC-POVMs and general SIC-POVMs}
\label{sec:1}
 We first review some basic knowledge of symmetric informationally complete measurements  and general symmetric informationally complete measurements. A POVM  $\{P_j\}$ with $d^2$ rank-$1$ operators acting on $\mathbb{C}^d$ is symmetric informationally complete, if
\begin{eqnarray}
&&P_j=\frac{1}{d}|\phi_j\rangle\langle\phi_j|, \\
&&\sum_{j =1}^{d^2}P_j=\mathbb{I},
\end{eqnarray}
where $j=1, 2, \dots, d^2$, $\mathbb{I}$ is the identity operator, and the vectors $|\phi_j\rangle$ satisfy $|\langle\phi_j|\phi_k\rangle|^2={1}/({d+1})$, $j\neq k$. The existence of SIC-POVMs in arbitrary dimension $d$ is still an open problem.  Only analytical solutions have been found in dimensions $d = 2-24,  28, 30,  31,  35,  37,  39,  43,  48,  124$, and  numerical solutions have been found up to dimension $151$ \cite{30}.

The concept and constructions of general SIC measurements are introduced in Refs. \cite{31, 32}. A set of $d^2$ positive semidefinite operators $\{P_\alpha\}_{\alpha =1}^{d^2}$ on $\mathbb{C}^{d}$ is said to be a general SIC measurements if
\begin{eqnarray}
&&\sum_{\alpha =1}^{d^2}P_\alpha=\mathbb{I}, \\
&&\Tr(P_\alpha^2)=a, \\
&&\Tr(P_\alpha P_\beta)=\frac{1-da}{d(d^2-1)},
\end{eqnarray}
where $\alpha, \beta \in \{1, 2, \dots, d^2\}, \alpha \neq \beta$, the parameter $a$ satisfies $\frac{1}{d^3} <a \leqslant \frac{1}{d^2}$, and $a=\frac{1}{d^2}$ if and only if all $P_\alpha$ are rank one, which gives rise to a SIC-POVM. It can be shown that $\Tr(P_\alpha)=\frac{1}{d}$ for all $\alpha$. Contrasting to SIC-POVM, the general SIC-POVM can be explicitly constructed \cite{32}. Let $\{F_\alpha\}_{\alpha=1}^{d^2-1}$ be a set of $d^2-1$ Hermitian, traceless operators acting on $\mathbb{C}^d$, satisfying $\Tr(F_\alpha F_\beta)=\delta_{\alpha, \beta}$. Define $F=\sum_{\alpha=1}^{d^2-1}F_\alpha$. The $d^2$ operators
\be
&&P_\alpha=\frac{1}{d^2}\mathbb{I}+t[F-d(d+1)F_\alpha], \alpha=1, 2, \dots, d^2-1,\\
&&P_{d^2}=\frac{1}{d^2}\mathbb{I}+t(d+1)F
\ee
form a general SIC measurement. Here $t$ should be chosen such that $P_\alpha\geqslant0$ and the parameter $a$ is given by
\be
a=\frac{1}{d^3}+t^2(d-1)(d+1)^3.
\ee

\section{Entanglement detection via general SIC-POVMs}\label{sec:2}
Entanglement detection via SIC-POVMs had been discussed in \cite{38}. However, the method subjects to the existence of SIC-POVMs, which is an open problem. Unlike the SIC-POVMs, the general symmetric informationally complete measurements do exist for arbitrary dimension $d$.

Consider a quantum state $\rho$ and a general SIC-POVM $\mathcal{M}_s=\{P_\alpha\}_{\alpha =1}^{d^2}$. We have the probability $p_\alpha=\langle P_\alpha \rangle=\Tr(P_\alpha \rho)$ of outcome $\alpha$. Conversely,  the quantum state $\rho$ can be reconstructed from these probabilities:
\be
\rho=\frac{d(d^2-1)}{ad^3-1}\sum_{\alpha=1}^{d^2}p_\alpha P_\alpha-\frac{d(1-ad)}{ad^3-1}\mathbb{I}.
\ee
Denote $(e|=(p_1\  p_2 \ \cdots\  p_{d^2})$ and $|e)=(p_1\  p_2 \ \cdots\  p_{d^2})^\mathrm{T}$.
Calculation shows that
\be
\sum_{\alpha=1}^{d^2}p_\alpha^2&=&\frac{(ad^3-1)\Tr(\rho^2)+d(1-ad)}{d(d^2-1)}\nonumber\\
&\leqslant& \frac{ad^2+1}{d(d+1)},
\ee
where the upper bound is saturated iff $\rho$ is pure.

Now consider a $d_A\times d_B$ bipartite state $\rho$, and
two general SIC-POVMs: $\{P_\alpha^A\}_
{\alpha=1}^{d_A^2}$ with parameter $a_A$ and $\{P_\alpha^B\}_{\alpha=1}^{d_B^2}$ with parameter $a_B$ for the two subsystems, respectively. The linear correlations between $P^A$ and $P^B$ read
\be
[\mathcal{P}]_{ij}=\langle P_i^A\otimes P_j^B\rangle.
\ee
Denote $\mathcal{P}$ the matrix with entries given by $[\mathcal{P}]_{ij}$.

\begin{theorem} If a bipartite state $\rho$ is separable, then
\be
\|\mathcal{P}\|_{\mathbf{tr}}\leqslant \sqrt{\frac{a_Ad_A^2+1}{d_A(d_A+1)}}\sqrt{\frac{a_Bd_B^2+1}{d_B(d_B+1)}},
\ee
where $\|\mathcal{P}\|_{\mathbf{tr}}=\Tr(\sqrt{ \mathcal{P}\mathcal{P}^\dagger})$.

\end{theorem}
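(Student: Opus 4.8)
The plan is to reduce the general separable case to product states and then apply the single-system bound already established in the excerpt. First I would write a separable state as $\rho=\sum_k q_k\,\rho_k^A\otimes\rho_k^B$ with $q_k\geqslant 0$ and $\sum_k q_k=1$, and note that the correlation matrix is linear in $\rho$, so that $\mathcal{P}=\sum_k q_k\,\mathcal{P}_k$, where $\mathcal{P}_k$ is the matrix whose entries are $\Tr\big((P_i^A\otimes P_j^B)\,\rho_k^A\otimes\rho_k^B\big)$.

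The key observation is that for a product state the correlation matrix factorizes. For $\rho^A\otimes\rho^B$ one has $[\mathcal{P}]_{ij}=\Tr(P_i^A\rho^A)\,\Tr(P_j^B\rho^B)=p_i^A\,p_j^B$, so that $\mathcal{P}=|e^A)(e^B|$ is a rank-one matrix, where $|e^A)$ and $(e^B|$ are the local probability (column and row) vectors. Consequently $\mathcal{P}\mathcal{P}^\dagger=\big(\sum_j (p_j^B)^2\big)\,|e^A)(e^A|$, and its single nonzero singular value is the product of the Euclidean norms of the two probability vectors, giving
\be
\|\mathcal{P}_k\|_{\mathbf{tr}}=\sqrt{\textstyle\sum_i (p_i^A)^2}\;\sqrt{\textstyle\sum_j (p_j^B)^2}.
\ee

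Next I would invoke the single-system inequality $\sum_\alpha p_\alpha^2\leqslant\frac{ad^2+1}{d(d+1)}$ proved earlier, applied separately to subsystem $A$ (with parameter $a_A$, dimension $d_A$) and to subsystem $B$ (with parameter $a_B$, dimension $d_B$). This bounds each Euclidean norm and yields, for every product term,
\be
\|\mathcal{P}_k\|_{\mathbf{tr}}\leqslant\sqrt{\frac{a_Ad_A^2+1}{d_A(d_A+1)}}\;\sqrt{\frac{a_Bd_B^2+1}{d_B(d_B+1)}},
\ee
which is precisely the claimed right-hand side. Finally I would assemble the general case from the triangle inequality (subadditivity) and absolute homogeneity of the trace norm, $\|\mathcal{P}\|_{\mathbf{tr}}=\big\|\sum_k q_k\mathcal{P}_k\big\|_{\mathbf{tr}}\leqslant\sum_k q_k\|\mathcal{P}_k\|_{\mathbf{tr}}$, and then use $\sum_k q_k=1$ together with the uniform per-term bound to conclude.

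The one step requiring genuine care — the main obstacle — is the rank-one factorization for product states and the correct evaluation of its trace norm; once that identity is in place, the reduction via convexity and the triangle inequality is routine. I expect no further difficulty beyond keeping the parameters $a_A,a_B$ and dimensions $d_A,d_B$ matched to the correct subsystem when applying the single-system bound.
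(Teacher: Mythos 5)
Your proposal is correct and follows essentially the same route as the paper: factor the correlation matrix of a product state as the rank-one matrix $|e_A)(e_B|$, evaluate its trace norm as the product of the Euclidean norms of the local probability vectors, bound each factor by the single-system inequality $\sum_\alpha p_\alpha^2\leqslant\frac{ad^2+1}{d(d+1)}$, and extend to general separable states by the triangle inequality (convexity) of the trace norm. Your write-up is in fact slightly more explicit than the paper's, which only sketches the convexity step.
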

\begin{proof}
We consider a pure separable state $\rho=\rho_A \otimes\rho_B$ at first. We have
\be
\mathcal{P}&=&\left(
  \begin{array}{c}
    p_{A, 1} \\
    \vdots \\
    p_{A, d_A^2} \\
  \end{array}
\right)
\left(
  \begin{array}{ccc}
    p_{B, 1} & \cdots & p_{B, d_B^2} \\
  \end{array}
\right)\equiv |e_A)(e_B|,
\ee
where $p_{A, i}=\Tr(P_i^A\rho)$ for $i=1, 2, \dots, d_A^2$ and $p_{B, j}=\Tr(P_j^B\rho)$ for $j=1, 2, \dots, d_B^2$. Then
\be
\| \mathcal{P}\|_{\mathbf{tr}}&=&(e_A\mid e_A)^{\frac{1}{2}}(e_B\mid e_B)^{\frac{1}{2}} \nonumber\\
&\leqslant& \sqrt{\frac{a_Ad_A^2+1}{d_A(d_A+1)}}\sqrt{\frac{a_Bd_B^2+1}{d_B(d_B+1)}}.
\ee
By employing the convexity property of the trace norm, we have
\be
\| \mathcal{P}\|_{\mathbf{tr}}\leqslant \sqrt{\frac{a_Ad_A^2+1}{d_A(d_A+1)}}\sqrt{\frac{a_Bd_B^2+1}{d_B(d_B+1)}}
\ee
for separable states.
\begin{flushright}
\end{flushright}
\end{proof}

\textit{\textbf{Remark 1.}} If one takes $a=\frac{1}{d^2}$, the criterion of Theorem 1 reduces to the criterion based on SIC-POVM \cite{38}, i.e., if a bipartite state $\rho$ is separable, then $\| \mathcal{P}\|_{\mathbf{tr}}\leqslant \sqrt{\frac{2}{d_A(d_A+1)}}\sqrt{\frac{2}{d_B(d_B+1)}}$.

\textit{\textbf{Remark 2.}} If $d_A=d_B$ and $a_A=a_B$, we have $\|\mathcal{P}\|_{\mathbf{tr}}\leqslant\frac{ad^2+1}{d(d+1)}$. Furthermore, for a product state, one gets
\be
J_a(\rho)\leqslant \| \mathcal{P}_s\|_{\mathbf{tr}},
\ee
where $J_a(\rho)=\sum\limits_{j=1}\limits^{d^2}\Tr(P_j\otimes Q_j \rho)$ \cite{33}.

\section{Examples}\label{sec:3}

Let us consider some examples to illustrate the effectiveness and superiority  of our criterion compared with the previously known criterion in \cite{33}  and the recently criterion in \cite{38}.

Let $\{P_\alpha\}_{\alpha=1}^{d^2}$ be a set of general SIC-POVM on $\mathbb{C}^d$ with the parameter $a$. Let $\bar{P}_\alpha$ denote the conjugation of $P_\alpha$. Then $\{\bar{P}_\alpha\}_{\alpha=1}^{d^2}$ is another set of general SIC-POVM with the same parameter $a$. We consider the case of $d=3$.

It can be shown that for any nonzero $t\in[-0. 012, 0. 012]$, the following nine matrices
\be
&&P_\alpha=\frac{1}{9}\mathbb{I}+t(G_9-12G_\alpha), \ \mathrm{for}\  \alpha=1, 2, \dots, 8,\\
&&P_9=\frac{1}{9}\mathbb{I}+4tG_9
\ee
form a general SIC-POVM, where
\begin{eqnarray*}
G_1=\left(
\begin{array}{ccc}
 \frac{1}{\sqrt{2}} & 0 & 0 \\
 0 & -\frac{1}{\sqrt{2}} & 0 \\
 0 & 0 & 0 \\
\end{array}
\right),~~
G_2=\left(
\begin{array}{ccc}
 0 & \frac{1}{\sqrt{2}} & 0 \\
 \frac{1}{\sqrt{2}} & 0 & 0 \\
 0 & 0 & 0 \\
\end{array}
\right),~~
G_3=\left(
\begin{array}{ccc}
 0 & 0 & \frac{1}{\sqrt{2}} \\
 0 & 0 & 0 \\
 \frac{1}{\sqrt{2}} & 0 & 0 \\
\end{array}
\right),
\end{eqnarray*}
\begin{eqnarray*}
G_4=\left(
\begin{array}{ccc}
 0 & -\frac{i}{\sqrt{2}} & 0 \\
 \frac{i}{\sqrt{2}} & 0 & 0 \\
 0 & 0 & 0 \\
\end{array}
\right),~~
G_5=\left(
\begin{array}{ccc}
 \frac{1}{\sqrt{6}} & 0 & 0 \\
 0 & \frac{1}{\sqrt{6}} & 0 \\
 0 & 0 & -\sqrt{\frac{2}{3}} \\
\end{array}
\right),~~
G_6=\left(
\begin{array}{ccc}
 0 & 0 & 0 \\
 0 & 0 & \frac{1}{\sqrt{2}} \\
 0 & \frac{1}{\sqrt{2}} & 0 \\
\end{array}
\right),
\end{eqnarray*}
\begin{eqnarray*}
G_7=\left(
\begin{array}{ccc}
 0 & 0 & -\frac{i}{\sqrt{2}} \\
 0 & 0 & 0 \\
 \frac{i}{\sqrt{2}} & 0 & 0 \\
\end{array}
\right),~~
G_8=\left(
\begin{array}{ccc}
 0 & 0 & 0 \\
 0 & 0 & -\frac{i}{\sqrt{2}} \\
 0 & \frac{i}{\sqrt{2}} & 0 \\
\end{array}
\right),~~
G_9=\left(
\begin{array}{ccc}
 \frac{1}{\sqrt{2}}+\frac{1}{\sqrt{6}} & \frac{1-i}{\sqrt{2}} & \frac{1-i}{\sqrt{2}} \\
 \frac{1+i}{\sqrt{2}} & -\frac{1}{\sqrt{2}}+\frac{1}{\sqrt{6}} & \frac{1-i}{\sqrt{2}} \\
 \frac{1+i}{\sqrt{2}} & \frac{1+i}{\sqrt{2}} & -\sqrt{\frac{2}{3}} \\
\end{array}
\right).
\end{eqnarray*}
We can use the two general SIC-POVMs  $\{P_\alpha\}_{\alpha=1}^{9}$ and  $\{\bar{P}_\alpha\}_{\alpha=1}^{9}$  to recognize entanglement.

\textit{\textbf{Example 1.}} Consider the isotropic states that are locally unitarily equivalent to a maximally entangled state mixed with white noise:
\be
\rho_{\mathrm{iso}}=q\mid\Phi^{+}\rangle\langle\Phi^{+}|+(1-q)\frac{\mathbb{I}}{d^2},\ \ \  0\leqslant q\leqslant1,
\ee
where $\mid\Phi^{+}\rangle=\displaystyle\frac{1}{\sqrt{d}}\sum\limits_{i=0}\limits^{d-1}\mid ii\rangle$.
For $d=3$, by directly calculating the correlation entries $[\mathcal{P}]_{ij}=\langle P_i\otimes \bar{P}_j\rangle$, $i,j=1,\dots,9$, we have
$\|\mathcal{P}\|_{\mathbf{tr}}-\frac{9a+1}{12}=96t^2(4q-1)>0$ for $\frac{1}{4}< q\leqslant 1$. Thus, our criterion can detect the entanglement of the state $\rho_{\mathrm{iso}}$ for $\frac{1}{4}< q\leqslant 1$.

\textit{\textbf{Example 2.}} Consider the Werner states \cite{39}
\be
W_d\equiv \frac{1}{d^3-d}((d-f)\mathbb{I}_{d^2}+(df-1)V),
\ee
where $-1\leqslant f\leqslant 1$, $V=\sum_{i,j=0}^{d-1}|ij\rangle \langle ji|$. $W_d$ is entangled if and only if $-1\leqslant f<0$.
For $d=3$, by direct calculation we have $\|\mathcal{P}\|_{\mathbf{tr}}-\frac{9a+1}{12}=48t^2(\sqrt{(3f-1)^2}-2)>0$ for $-1\leq f <-\frac{1}{3}$. Thus our criterion recognizes the entanglement for $-1\leqslant f <-\frac{1}{3}$.
From the criterion in \cite{33}, one has $J_a(W_3)-\frac{9a+1}{12}=\sum\limits_{j=1}\limits^{d^2}\Tr(P_j\otimes \bar{P}_j W_3)-\frac{9a+1}{12}=36(f-3)t^2<0$, since $-1\leqslant f\leqslant 1$. Hence, our criterion is more efficient than the criterion in \cite{33}.

\textit{\textbf{Example 3.}} Consider the $3\times 3$ bound entangled state $\rho^x$ \cite{9},
\begin{center}
\be
\rho^x=\left(
\begin{array}{ccccccccc}
 \frac{x}{8 x+1} & 0 & 0 & 0 & \frac{x}{8 x+1} & 0 & 0 & 0 & \frac{x}{8 x+1} \\
 0 & \frac{x}{8 x+1} & 0 & 0 & 0 & 0 & 0 & 0 & 0 \\
 0 & 0 & \frac{x}{8 x+1} & 0 & 0 & 0 & 0 & 0 & 0 \\
 0 & 0 & 0 & \frac{x}{8 x+1} & 0 & 0 & 0 & 0 & 0 \\
 \frac{x}{8 x+1} & 0 & 0 & 0 & \frac{x}{8 x+1} & 0 & 0 & 0 & \frac{x}{8 x+1} \\
 0 & 0 & 0 & 0 & 0 & \frac{x}{8 x+1} & 0 & 0 & 0 \\
 0 & 0 & 0 & 0 & 0 & 0 & \frac{x+1}{2 (8 x+1)} & 0 & \frac{\sqrt{1-x^2}}{2 (8 x+1)} \\
 0 & 0 & 0 & 0 & 0 & 0 & 0 & \frac{x}{8 x+1} & 0 \\
 \frac{x}{8 x+1} & 0 & 0 & 0 & \frac{x}{8 x+1} & 0 & \frac{\sqrt{1-x^2}}{2 (8 x+1)} & 0 & \frac{x+1}{2 (8 x+1)} \\
\end{array}
\right),
\ee
\end{center}
where $0<x<1$.

\begin{figure}[t]
 \centering
  \includegraphics[width=8cm]{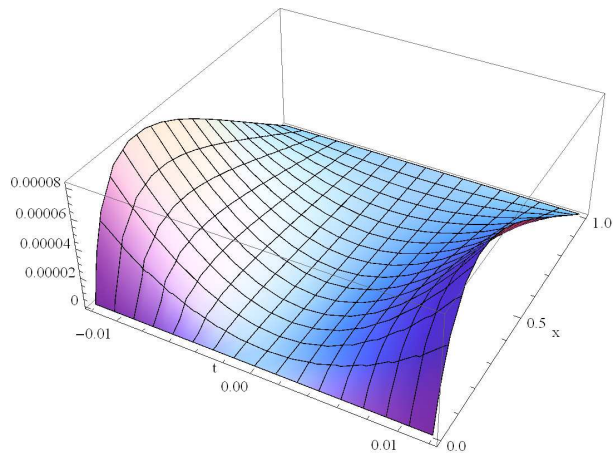}
  \caption{\label{fig:1} The value of $|\mathcal{P}\|_{\mathbf{tr}}-\frac{9a+1}{12}$ for the state $\rho^x$, where $x\in(0,1)$ and $t\in[-0.012,0.012]$.}
\end{figure}

By straightforward computation, we have that $\|\mathcal{P}\|_{\mathbf{tr}}>\frac{9a+1}{12}$ for $0<x<1$.
Thus, our criterion can detect the entanglement for the whole family of $3\times 3$ bound entangled states.
In Fig.~\ref{fig:1}, we plot the value of $|\mathcal{P}\|_{\mathbf{tr}}-\frac{9a+1}{12}$ as a function of $x$ and $t$.

Now we add white noise to $\rho^x$, and consider
\be
\rho(x, q)=q\rho^x+\frac{(1-q)}{9}\mathbb{I}, ~~ 0\leqslant q \leqslant 1.
\ee
Using the same general SIC-POVMs, we have
\be
J_a(\rho(x, q))-\frac{9a+1}{12}&&=\sum\limits_{j=1}\limits^{d^2}\Tr(P_j\otimes \bar{P}_j\rho(x, q))-\frac{9a+1}{12}\nonumber\\
 &&=24t^2(-4+\frac{q+35qx}{1+8x}).
\ee
From Fig.~\ref{fig:2}, one can easily find that $J_a(\rho(x, q))-\frac{9a+1}{12}<0$ for all permissible $x,\ q$. Thus, our criterion is shown to be more efficient in detecting entanglement of $\rho(x, q)$ than the criterion of Ref. \cite{33}.

\begin{figure}
 \centering
  \includegraphics[width=8cm]{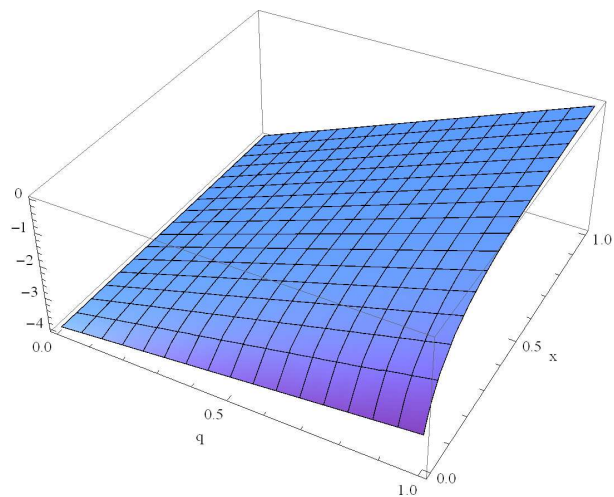}\\
  \caption{ \label{fig:2}The value of $-4+\frac{q+35qx}{1+8x}$ for $x\in(0,1)$ and $q\in[0,1]$.}
\end{figure}

Moreover, our criterion can successfully detect entanglement for some states that cannot be detected by the criterion in \cite{38}. Let us consider the following three states,
$\rho(0.25, 0.994)$, $\rho(0.45, 0.995)$ and $\rho(0.57, 0.996)$, whose entanglement cannot be identified by the criterion in \cite{38}. Denote the correlation entries  by  $[\mathcal{P}^{\alpha}]_{ij}=\langle P_i\otimes \bar{P}_j\rangle$, $i,j=1,\dots,9$, $\alpha=1,2,3$, for three states $\rho(0.25, 0.994)$, $\rho(0.45, 0.995)$ and $\rho(0.57, 0.996)$, respectively. We have $\|\mathcal{P}^1\|_{\mathbf{tr}}>\frac{9a+1}{12}$, $\|\mathcal{P}^2\|_{\mathbf{tr}}>\frac{9a+1}{12}$ and $\|\mathcal{P}^3\|_{\mathbf{tr}}>\frac{9a+1}{12}$ in the respective fixed parameter interval, see Fig.~\ref{fig:3}.  Thus, our criterion can successfully detect the entanglement of these states by suitably choosing the
general SIC measurements, namely, the parameter $t$. Therefore, in this case our criterion is more efficient than the criterion in \cite{38}.

\begin{figure}
\centering
\subfigure{
\begin{minipage}[b]{0.4\textwidth}
\includegraphics[width=1\textwidth]{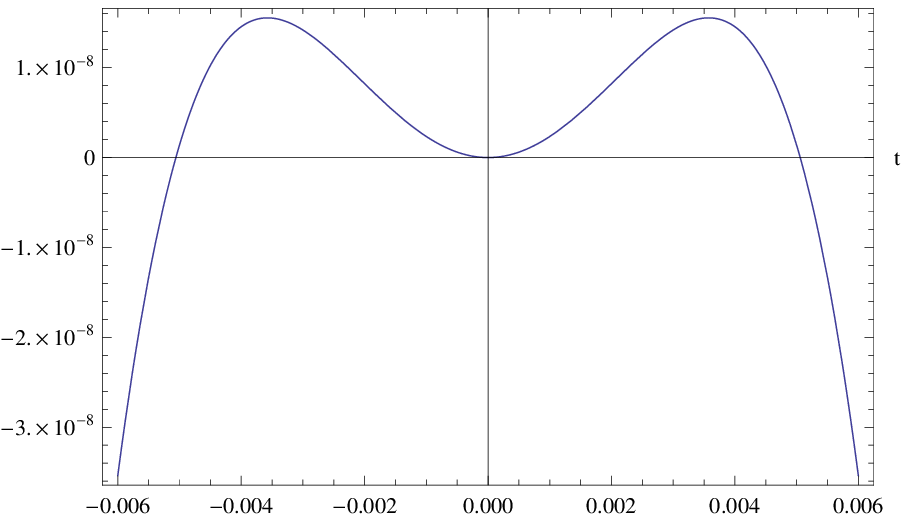}\\
\includegraphics[width=1\textwidth]{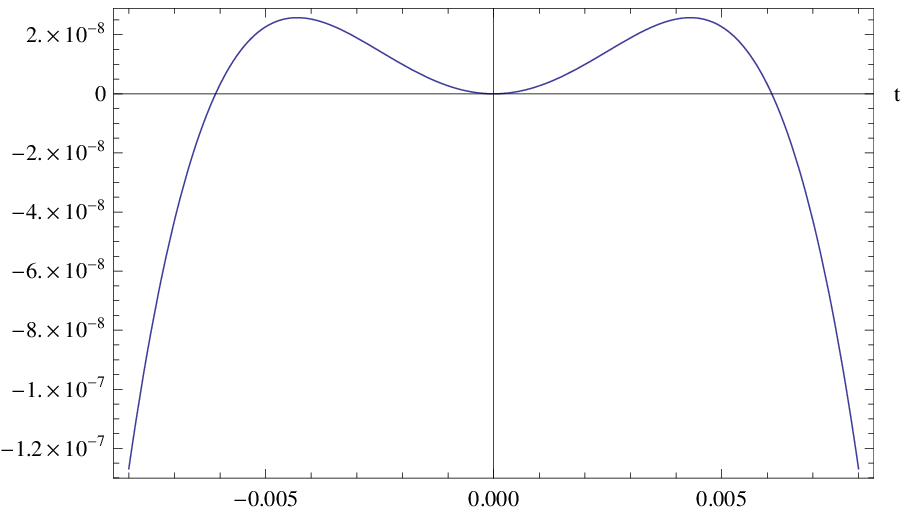}\\
\includegraphics[width=1\textwidth]{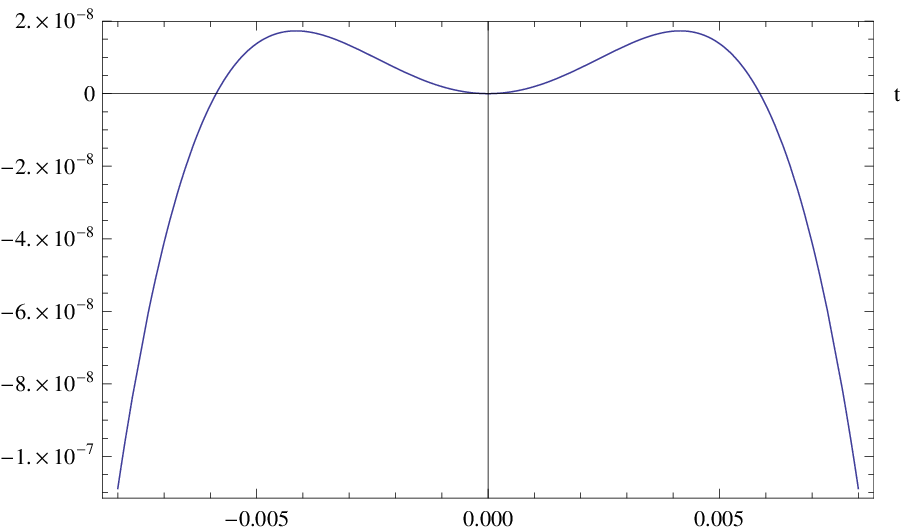}
\end{minipage}
}
\caption{\label{fig:3}The value $\| \mathcal{P}\|_{\mathbf{tr}}-\frac{9a+1}{12}$ for the  states $\rho(0.25, 0.994)$, $\rho(0.45, 0.995)$ and $\rho(0.57, 0.996)$ (form top to bottom).}
\end{figure}

\section{Conclusion}\label{sec:4}

We have presented an entanglement detection criterion constructed from the general SIC measurements.  Interestingly, this construction includes the criterion constructed by the SIC measurements as a special case that the parameter $a$ of general SIC measurements is equal to ${1}/{d^2}$. The criterion has been shown to be more efficient in detecting entanglement of some quantum states than the existing criteria.
Moreover, our separability criterion is experimentally feasible.

\begin{acknowledgements}
This work is supported by the NSF of China under Grant No.11675113, the Research Foundation for Youth Scholars of BTBU QNJJ2017-03, Beijing Municipal Commission of Education under Grant Nos. KM 201810011009 and KZ201810028042.
\end{acknowledgements}

\end{document}